\newif\ifdoc
\definecolor{grey}{rgb}{.85,.85,.85}
\renewcommand{\mathbb}{{}}
\renewcommand{\boldsymbol}[1]{{#1}}
\renewcommand{\mathcal}[1]{{#1}}
\renewcommand{\binom}[2]{{#1\choose#2}}
\renewcommand{\operatorname}[1]{{#1}}
\newcommand{\Proba}{\operatorname{Pr}}
\newcommand{\F}{\mathbb F}
\newcommand{\N}{\mathbb N}
\newcommand{\ev}{\operatorname{ev}}
\newcommand{\set}[1]{\left\{#1\right\}}
\newcommand{\size}[1]{\left|#1\right|}
\newcommand{\vect}[1]{\left(#1\right)}
\renewcommand{\refeq}[1]{Eq.~\ref{#1}}
\newcommand{\reffig}[1]{Fig.~\ref{#1}}
\newcommand{\MRM}{\text{Mult}}
\newcommand{\RM}{\text{RM}}
\newcommand{\RS}{\text{RS}}
\newcommand{\restrict}[2]{#1_{#2}}
\newcommand{\Hasse}[2]{H(#1,#2)}
\newcommand{\coeff}[2]{\text{coeff}(#1,#2)}
\newcommand{\vX}{{\boldsymbol X}}
\newcommand{\vZ}{{\boldsymbol Z}}
\newcommand{\vi}{{\boldsymbol i}}
\newcommand{\vj}{{\boldsymbol j}}
\newcommand{\vv}{{\boldsymbol v}}
\newcommand{\vV}{\boldsymbol V}
\newcommand{\vU}{\boldsymbol U}
\newcommand{\vQ}{{\boldsymbol Q}}
\newcommand{\vP}{{\boldsymbol P}}
\newcommand{\vR}{{\boldsymbol R}}
\begin{document}

\title{A Storage-efficient and Robust Private Information Retrieval
  Scheme allowing few servers}
%
%
%
%
%

%
\author{
%
%
 Daniel Augot\inst{1,2},
 Fran\c coise Levy-dit-Vehel\inst{1,2,3},
 Abdullatif Shikfa\inst{4}
 }
\institute{
INRIA 
\and 
Laboratoire d'informatique de l'\'Ecole polytechnique
\and   
ENSTA ParisTech/U2IS
\and
Alcatel-Lucent 
}
\maketitle
\begin{abstract}
  Since the concept of locally decodable codes was introduced by Katz
  and Trevisan in 2000 \cite{KatzTrevisan00}, it is well-known that
  information theoretically secure private information retrieval
  schemes can be built using locally decodable codes
  \cite{YekhaninBookLDCPIR:2010}.  In this paper, we construct a
  Byzantine robust PIR scheme using the multiplicity codes introduced
  by Kopparty {\em et al.}  \cite{KopSarYek2011}. Our main
  contributions are on the one hand to avoid full replication of the
  database on each server; this significantly reduces the global
  redundancy. On the other hand, to have a much lower locality in the
  PIR context than in the LDC context. This shows that there exists
  two different notions: LDC-locality and PIR-locality. This is made
  possible by exploiting geometric properties of multiplicity codes.
\end{abstract}




\section{Introduction}
Private information retrieval allows a user to privately retrieve a
record of a database, in the sense that the database server does not
know which record the user is asking for. The applications of this
functionality are numerous.  Imagine for instance doctors having to
query a company-wide database storing medical for patients, or a
police officer wanting to request financial data from the fiscal
administration. In both cases, to respect privacy of the patient, or
secrecy of the inquiry, it is desirable that the central
administration does not know about the queries sent by these users
(the doctor or the police officer). A private information retrieval
protocol will allow these users to send their queries to the
databases, without revealing what they are asking for (either the name
of patient, or the name of the suspect under inquiry). Another example
is an Internet user who wants to use cloud-based remote storage
services, like DropBox, GoogleDrive, CloudMe, hubiC, etc, to store
data, and retrieve portion of its data without revealing to these
remote services anything about what he is after.

\paragraph{Related work.}
The problem of Private Information retrieval (PIR) was introduced in
1995 by Chor, Goldreich, Kushilevitz and Sudan~\cite{CGKS95}. A PIR
protocol is a cryptographic protocol the purpose of which is to
protect the privacy of a user accessing a public database via a
server, in the sense that it makes it possible for a user to query a
particular record of the database without revealing to the server
which record he wants to retrieve.  We here deal with {\em information
  theoretic} PIR, as opposed to {\em computationally secure}
PIR~\cite{KO97}. In an \emph{information theoretic} PIR setting, a
server gets no information about the identity of the record of user
interest even if it has unlimited computing power: the queries sent to
the server must not be correlated to the actual record the user is
looking for.  In~\cite{CGKS95} it is shown that when accessing a
database located on a single server, to completely guarantee the
privacy of the user in an information theoretic sense, one needs to
download the entire database, which results in a communication
complexity of $O(N)$, $N$ being the bit-size of the database.  Thus
scenarios have been introduced where the database is replicated across
several, say $\ell$, servers, and the proposed schemes have
communication complexity $O(N^{1/\ell})$, for $\ell\geq 3$.  Such
multiple-server settings have been investigated since then, and the
best communication complexity to date is $N^{O(1/(\log_2 \log_2N))}$
for 3-server PIR protocols (from matching vector codes
construction~\cite{Yekhanin08,Efremenko09}) and
$N^{O((\log_2\log_2\ell)/\ell \log_2\ell)}$ for $\ell \geq
3$~\cite{Beimeletal2001} .

Beimel and Stahl \cite{SCN02,JOC07} have proposed several robust
information theoretic PIR protocols, based on polynomial
interpolation, as well as on Shamir's secret sharing scheme. They have
built a generic transformation from regular to robust PIR protocols
that relies on perfect hash families. They also addressed the
Byzantine setting.  Recently, Devet, Goldberg and Heninger
\cite{DeGoHe2012} proposed an Information-Theoretic PIR tolerating the
maximum possible number of Byzantine servers. 
In all these previous proposals, the (encoded or not) database is fully replicated among the servers.

\paragraph{Our contribution.} Our main concern is to reduce the global
storage overhead. We achieve this by avoiding full replication of the
database among the servers. We use multiplicity codes and exploit the
geometry of $\F_q^m$ to partition the encoded database (codeword) of
bit-size $N$ into $q$ shares of equal size, and distribute them among
the servers (one share for one server). This way, we reduce the
storage on each server from $N$ bits down to $N/q$ bits, $q$ being the
number of servers, while totally preserving the information theoretic
security of the PIR protocol.  Here $N=\log_2 (q^{\sigma q^m})=\sigma
q^m \log_2 q$, with $\sigma=\binom{m+s-1}{m}$, and $s$ is an auxiliary
small integer (say $s \leq 6$) used in the construction of
multiplicity codes. Given that the code has rate $R$, the storage
overhead of our scheme is thus $\frac 1R$ instead of $\frac1R\ell$ for
schemes with full replication of the encoded database (as in the
standard LDC to PIR reduction), $\ell$ being the number of servers
($\ell=q$ in our scheme).  The number of servers is also drastically
reduced, from $\sigma(q-1)$ to $q$, see~Fig~\ref{fig:params}.

The communication complexity in bits (total number of bits sent by the
user to all the servers as queries of our protocol) is $(m-1)q\sigma
\log_2 q$, and the total number of bits answered by the servers is $q
\sigma^2 \log_2 q$. Thus the communication complexity is
$(m-1+\sigma)q\sigma \log_2q$ bits. Putting $\ell=q$ the number of
servers, and in contexts where $s$ is small, say $s\leq 6$, this gives
a communication complexity of $O(\ell(\log_2 N)^s)$.

Our protocol tolerates $\nu=\lfloor t \rfloor$ byzantine servers, $t=1/2(q-1-d/s)$, $d$ being the degree of the multiplicity code, in the sense that even if $\nu$ out of $q$ servers always answer wrongly, then the database item can still be correctly recovered by the user. Thus 
our protocol is a $\nu$-Byzantine robust PIR protocol.
The property of being robust is a built-in feature of the decoding algorithms that are involved in the process of retrieving the database item. 


\paragraph{Organization of the paper.}
In section~\ref{prelim} we recall the basics of locally decodable and
self-correctable codes, private information retrieval schemes, and the
link between the two notions; we also set the necessary material and
notation to define multiplicity codes, namely Hasse
derivatives. Section~\ref{codes} describes the multiplicity codes
\cite{KopSarYek2011} as a generalization of Reed Muller codes, and
explains their local decoding. Section~\ref{hyp} contains our main
ideas: we explain how we use multiplicity codes in a PIR scenario in
such a way as to avoid full replication of the encoded database. We
also explain how we achieve the Byzantine robustness property of our
protocol.  We end the paper by numerical tables showing the main
features of the codes (rate, locality) for various parameter sizes.

\section{Preliminaries} \label{prelim}

\subsection{Locally decodable and locally self-correctable codes}

A code in the ambient space is seen  as an encoding map, which
encodes a message of $k$ symbols on an alphabet $\Delta$ into
code-vectors, or {\em codewords} of $n$ symbols on some alphabet
$\Sigma$ (possibly different from $\Delta)$. That is, it is a
one-to-one map $C:\Delta^k\rightarrow\Sigma^n$. The decoding problem
is to find codewords close enough to any element $y$ in the ambient
space (the ``received word'' in coding theory language). Formally,
given a distance $d()$, code $C\subset \Sigma^n$, for a given
$y=(y_1,\dots,y_n) \in \Sigma^n$, one has to find one, some, or all
codewords $c\in C$ such that $d(c,y)$ is small. In our setting, the
distance $d(x,y)$ is the Hamming distance which is the number of
indices $i$ where $x_i\neq y_i$. A major concern is to build codes
with small redundancy, or equivalently, large \emph{rate}, where the
rate is $(k\log|\Delta|)/(n\log|\Sigma|)$. In classical settings,
$\Delta=\Sigma$, and the rate is simply $k/n$.

%

Locally decodable codes, in short LDCs, allow efficient sublinear time
decoding. More precisely, an $\ell$-query LDC allows to
probabilistically recover any symbol of a message by looking at only
$\ell \leq k$ randomly chosen coordinates of its - possibly corrupted
- encoding. The major objective is to have $\ell\ll k$.  Although LDCs
appeared in the PCP literature in early 90's
\cite{YekhaninBookLDCPIR:2010}, their first formal definition is due
to Katz and Trevisan in 2000 \cite{KatzTrevisan00}.  The number $\ell$
of queried symbols is the {\em query complexity}, that we also call
here {\em locality}.  Formally: 
\begin{definition}
A code $C: \Delta^k \rightarrow \Sigma^n$ is
$(\ell,\delta)${\em -locally decodable}
if there exists a randomized decoding algorithm
${\mathcal A}$ such that 
\begin{enumerate}
\item for any \emph{message} $x \in \Delta^k$ and any $y \in \Sigma^n$ with
  $d(C(x),y) < \delta n$, we have, for all $i \in [k]$,
$
\Proba[{\mathcal A}^{y}(i)=x_i] \geq \frac{2}{3}
$,
\item ${\mathcal A}$ makes at most $\ell$ queries to $y$.
\end{enumerate}
\end{definition}
Here, and in the following, ${\mathcal A}^{y}$ means that ${\mathcal
  A}$ is given query access to $y$, and the probability is taken over
all internal random coin tosses of ${\mathcal A}$.  In the case when
one wants to probabilistically recover {\em any} codeword symbol and
not only information symbols, one has the following definition.
\begin{definition} A code $C: \Delta^k \rightarrow \Sigma^n$ is
$(\ell,\delta)${\em -locally self-correctable (LCC)} if there exists a randomized decoding algorithm ${\mathcal A}$ such that 
\begin{enumerate}
\item for any \emph{codeword} $c \in \Sigma^n$ and $y \in \Sigma^n$
  with $d(c,y) < \delta n$, we have, for all $i \in [k]$,
$
\Proba[{\mathcal A}^{y}(i)=c_i] \geq \frac{2}{3}$,
\item  ${\mathcal A}$ makes at most $\ell$ queries to $y$.
\end{enumerate}
\end{definition}

When $\Delta=\Sigma=\F_q$, the finite field with $q$ elements, and
when the code is $\F_q$-linear, one can easily construct an LDC from a
LCC \cite{YekhaninBookLDC:2012}.  No known constructions of LDCs or
LCCs minimize both $\ell$ and the length $n$ simultaneously.  The
oldest class of LDCs are the Reed-Muller codes over $\F_{q}$, whose
codewords are the evaluations of $m$-variate polynomials of total
degree at most $d$ over $\F_{q}$ on all the points of $\F_{q} ^m$. The
main issues are thus to minimize one parameter given that the other
one is fixed. With this respect, constructions of subexponential
length codes with constant query complexity $\ell \geq 3$
exist~\cite{YekhaninBookLDCPIR:2010}. On the other side, constant rate
LDCs feature an $\ell$ which is known to lie between $\Omega(\log_2k)$
and $\Theta(k^{\epsilon})$, with explicit constructions for the latter
bound. A major result is the construction of high-rate (i.e. $>1/2$)
locally self-correctable codes with sublinear query complexity, in
the presence of a constant (as a function of the distance of the code)
fraction of errors. Those codes are known as {\em Multiplicity Codes}
and were introduced by Kopparty, Saraf and Yekhanin in 2011
\cite{KopSarYek2011}. They generalize the Reed-Muller codes by
evaluating high degree multivariate polynomials as well as their
partial derivatives up to some order $s$.  Using high-degree
polynomials improves on the rate, while evaluating their partial
derivatives compensates for the loss in distance.  Other LDC
constructions achieving rate $>1/2$ and query complexity
$n^{\epsilon}$ are the one of Guo \emph{et al.}~\cite{GKS13} based on
lifting affine-invariant codes (namely, Reed-Solomon codes), and the
Expander codes of Hemenway \emph{et al.}~\cite{HOW13}.

In this work, we  use Multiplicity codes, but recall Reed-Muller
codes and their local decoding for the sake of comprehension. These
codes provide the simplest geometric setting for partitioning a
codeword and laying it out on servers. We think such a partition can
be done for other families of LDC codes, e.g.\ matching-vector codes,
affine invariant codes and possibly Expander codes.

\subsection{Private information retrieval schemes}
We model the database as a string $x$ of length $k$ over $\Delta$. An
$\ell$-server PIR scheme involves $\ell$ servers
$S_1,\ldots,S_{\ell}$, each holding the same database $x$, and a user
who knows $k$ and wants to retrieve some value $x_i$, $i \in [k]$,
without revealing any information about $i$ to the servers.
\begin{definition}[Private Information Retrieval (PIR)]
An $\ell$-server $p$-PIR protocol is a triple  $({\mathcal Q},{\mathcal
  A},{\mathcal R})$ of algorithms running as follows:
\begin{enumerate}
\item User obtains a random string $s$; then he invokes ${\mathcal Q}$
  to generate an $\ell$-tuple of queries
  $(q_1,\ldots,q_{\ell})={\mathcal Q}(i,s)$. 
\item For $1 \leq j \leq \ell$,
  User sends $q_j$ to server $S_j$;
\item Each $S_j$ answers $a_j={\mathcal A}(j,x,q_j)$ to User; 
\item User recovers $x_i$ by applying the reconstruction algorithm
${\mathcal R}(a_1,\ldots,a_{\ell},i,s)$.
\end{enumerate}
Furthermore the protocol has the \emph{Correctness property}: for any
$ x \in \Delta^k$, $i \in [k]$, User recovers $x_i$ with probability at
least $ p$; and the \emph{Privacy property}: each server individually
can obtain no information about $i$. 
\end{definition}
The Privacy property can be obtained by requiring that for all $ j \in
[\ell]$, the distribution of the random variables ${\mathcal
  Q}(i,\cdot)_j$ are identical for all $i \in [k].$ Katz and Trevisan
\cite{KatzTrevisan00}, introduced a notion very relevant in the
context of locally decodable codes: that of {\em smooth codes}. The
notion of smooth codes captures the idea that a decoder cannot read
the same index too often, and implies that the distributions ${\cal
  Q}(i,\cdot)_j$ are close to uniform. All known examples are such
that the distribution ${\cal Q}(i,\cdot)_j$ are actually
uniform. Uniform distribution of the queries among codeword (or
received word) coordinates is what is needed in the PIR setting in
order to achieve information theoretic privacy of the queries.  The
locality as a core feature of LDCs, together with the fact that in all
known constructions of LDCs the queries made by the local decoding
algorithm ${\mathcal A}$ are uniformly distributed, make the
application of LDCs to PIR schemes quite natural. Note also that
conversely PIR schemes can be used to build LDCs with best asymptotic
code-lengths~\cite{Beimeletal2001,Yekhanin08,Efremenko09}. The lemma below describes how it formally works.

\begin{lemma}[Application of LDCs to PIR schemes] \label{lemma:LDC2PIR}
Suppose there exists
an $\ell$-query locally decodable code $C: \Delta^k \rightarrow
\Sigma^n$, in which each decoder's query is uniformly distributed over
the set of codeword coordinates.  Then there exists an $\ell$-server
1-PIR protocol with $O(\ell (\log_2n+\log_2\size{\Sigma}))$ communication to
access a database $x \in \Delta^k$.
\end{lemma}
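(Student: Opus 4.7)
The plan is to build the PIR protocol by directly simulating the LDC's local decoder with the help of $\ell$ servers, each storing a copy of the encoded database $C(x) \in \Sigma^n$. The user plays the role of the decoder $\mathcal{A}$ but delegates each of the $\ell$ queries to a distinct server, so that no single server ever sees more than one of the coordinates inspected by $\mathcal{A}$.

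Concretely, I would specify the PIR triple $(\mathcal{Q}, \mathcal{A}_{PIR}, \mathcal{R})$ as follows. Given a random string $s$, the query algorithm $\mathcal{Q}(i,s)$ runs the LDC decoder $\mathcal{A}$ with input $i$ and coin tosses $s$ up to the point where it chooses its $\ell$ query coordinates, and outputs the tuple $(q_1,\dots,q_\ell)\in[n]^\ell$. The server algorithm is simply $\mathcal{A}_{PIR}(j,x,q_j)=C(x)_{q_j}\in\Sigma$. The reconstruction $\mathcal{R}(a_1,\dots,a_\ell,i,s)$ resumes the simulation of $\mathcal{A}$, plugging in the answers $a_j$ as the symbols the decoder would have read from its oracle, and returns whatever $\mathcal{A}$ outputs.

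For the privacy property, I would observe that the entire view of server $S_j$ consists of the single coordinate $q_j$. By hypothesis, the marginal distribution of $q_j$ over the user's random string $s$ is uniform on $[n]$, independently of $i$, so $S_j$'s view is statistically independent of $i$ and perfect information-theoretic privacy holds. For the communication complexity, the user sends $\ell$ indices in $[n]$ (totalling $\ell\lceil\log_2 n\rceil$ bits) and receives $\ell$ symbols of $\Sigma$ (totalling $\ell\lceil\log_2|\Sigma|\rceil$ bits), yielding the claimed $O(\ell(\log_2 n + \log_2|\Sigma|))$ bound.

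The delicate point, and the one I expect to require care, is the correctness parameter $p=1$. Since the servers are honest, the implicit ``received word'' given to the simulated decoder is exactly $C(x)$, for which the LDC definition only guarantees success probability $2/3$. To upgrade this to probability $1$, I would invoke the standard strengthening that in all natural LDC constructions — and in particular whenever the code is \emph{perfectly smooth} in the sense of Katz--Trevisan — the local decoder is deterministically correct on an error-free codeword; this is precisely the regime covered by the hypothesis that the queries are uniformly distributed. Under this mild assumption the reconstruction $\mathcal{R}$ returns $x_i$ with probability $1$, and the lemma follows. Failing this, one can relax the conclusion to $p$-PIR with the same $p$ as the LDC, at the cost of weakening the statement.
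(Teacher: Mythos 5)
Your construction and analysis are essentially identical to the paper's: each server stores a full copy of $C(x)$, the user simulates the local decoder and routes one query per server, privacy follows from the uniformity of each marginal $q_j$, and the communication bound is immediate from counting $\ell$ indices in $[n]$ and $\ell$ symbols of $\Sigma$. You also correctly flag something the paper's proof silently assumes --- that success probability $1$ on an uncorrupted codeword does not follow from the bare LDC definition (which only promises $\geq 2/3$) and requires the additional perfect-decoding-on-codewords property that holds for smooth decoders --- which is a genuine, if minor, tightening of the paper's one-line justification.
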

\begin{proof} Given an LDC $C: \Delta^k \rightarrow \Sigma^n$ as in
  the lemma, one constructs the following PIR protocol. First, in a
  preprocessing step, for $1 \leq j \leq \ell$, server $S_j$ encodes
  $x$ with $C$. Then, to actually run the protocol, User tosses random
  coins and invokes the local decoding algorithm to determine the
  queries $(q_1,\ldots,q_{\ell}) \in [n]^{\ell}$ such that $x_i$ can
  be computed from $\{C(x)_{q_j}\}_{1 \leq j \leq \ell}$.  For $1 \leq
  j \leq \ell$, User sends $q_j \in [n]$ to server $S_j$, and each
  server $S_j$ answers $C(x)_{q_j}\in \Sigma$. Finally, User applies the local
  decoding algorithm of $C$ to recover $x_i$.

  This protocol has the communication complexity claimed in the
  lemma. Furthermore, as the user applies the local decoding algorithm
  with non corrupted inputs $\{C(x)_{q_j}\}_{1 \leq j \leq \ell}$, he
  retrieves $x_i$ with probability 1. Uniformity of the distribution
  of the decoder's queries over $[n]$ ensures the
  information-theoretic privacy of the protocol.
\end{proof}
\subsection{Hasse derivative for multivariate polynomials}
\subsubsection{Notation}
Considering $m$ indeterminates $X_1,\dots,X_m$, and $m$ positive
integers $i_1,\dots,i_m$, we use the short-hand notation
\begin{align*}
\vX&=(X_1,\dots,X_m)&
\vX^\vi&=X_1^{i_1}\cdots X_m^{i_m},
&\F_q[\vX]&=\F_q[X_1,\dots,X_m]\\
\vi&=(i_1,\dots,i_m)\in\N^m&\size{\vi}&=i_1+\dots+i_m&
\vP&=(p_1,\dots,p_m)\in\F_q^m
\end{align*}
i.e.\ we use bold symbols for vectors, points, etc, and standard
symbols for uni-dimensional scalars, variables, etc. In general, we
write polynomials $Q\in\F_q[\vX]=\F_q[X_1,\dots,X_m]$ without
parenthesis and without variables, and $Q(\vX)$ (resp. $Q(\vP)$) when
the evaluation on indeterminates (resp. points) has to be specified.
For $\vi, \vj \in \N^m$, $\vi \gg \vj$ means $i_t \geq j_t \,\forall 1 \leq t \leq m$.

\subsubsection{Hasse derivative}
Given a multi-index $\vi$, and $F\in\F_q[\vX]$, the $\vi$-th Hasse
derivative of $F$, denoted by $\Hasse F {\vi}$, is the
coefficient of $\vZ^{\vi}$ in the polynomial $F(\vX+\vZ) \in
\F_q[\vX,\vZ]$, where $\vZ=(Z_1,\ldots,Z_m)$. More specifically, let
$F(\vX)=\sum_{\vj\gg0}f_{\vj}\vX^{\vj}$, then
\begin{gather*}
\begin{aligned}
  F(\vX+\vZ)&=\sum_{\vj}f_{\vj}(\vX+\vZ)^{\vj}=\sum_{\vi}\Hasse{F}{\vi}(\vX)\vZ^{\vi},\\
\end{aligned}
\shortintertext{where $\vZ^\vi$ stands for $Z_1^{i_1}\cdots
  Z_m^{i_m}$, and }
\begin{aligned}
  \Hasse
  F{\vi}(\vX)&=\sum_{\vj\gg\vi}f_{\vj}\binom{\vj}{\vi}\vX^{\vj-\vi}\quad
  \text{with } \binom \vj\vi=\binom {j_1}{i_1}\cdots\binom{j_m}{i_m}.
\end{aligned}
\end{gather*}
Considering a vector $\vV\in\F_q^m\setminus\set{0}$, and a base point
$\vP$, we consider the restriction of $F$ to the line
$D=\set{\vP+t\vV:\; t\in \F_q}$,
which is a univariate polynomial that we denote by $\restrict F
{\vP,\vV}(T)=F(\vP+T\vV)\in \F_q[T]$. We have the following relations:
\begin{align}
  \restrict F {\vP,\vV}(T)&=\sum_{\vj} \Hasse F {\vj} (\vP) \vV^{\vj} T^{\size{\vj}},\\
  \coeff {\restrict F{\vP,\vV}} i&=\sum_{\size \vj=i} \Hasse F
  {\vj}(\vP)\vV^{\vj},\label{coeffeq}
\\
  \Hasse {\restrict F {\vP,\vV}} i(\alpha)&=
  \sum_{\size{\vj}=i}\Hasse F
  {\vj}(\vP+\alpha\vV)\vV^{\vj}, \quad \text{for all }
  \alpha\in\F_q\label{HasseUniv} 
\end{align}

\section{Multiplicity codes} \label{codes}
\subsection{Local decoding of Reed-Muller codes}
We enumarte the finite field $\F_q$ with $q$ elements as
$\F_q=\set{\alpha_0=0,\alpha_1,\dots,\alpha_{q-1}}$.
We denote by $\F_q[\vX]_d$ the set of polynomials of degree less than or equal to $d$, which has dimension $k=\binom {m+d}d$. We enumerate all the points
in $\F_q^m$:
\begin{equation}\label{eq:enum:Fqm}
\F_q^m=\left\{\vP_1,\dots,\vP_n\right\}
\end{equation}
where $\vP_i=\vect{P_{i,1},\dots,P_{i,m}}\in\F_q^m$, is an $m$-tuple of
$\F_q$-symbols, and $n=q^m$.  We encode a polynomial $F$ of degree $\leq d$ into a codeword $c$ of length $n$ using the evaluation
map
\[
\ev:\begin{array}[t]{rcl}
\F_q[\vX]_d&\rightarrow&\F_q^n\\
F & \mapsto & \left(F(\vP_1),\dots,F(\vP_n)\right)
\end{array}
\]
and the $d$-th order Reed-Muller code is $\RM_d=\left\{\ev(F)\;\mid
  F\in \F_q[\vX]_d\right\}$.  The evaluation map $\ev$ encodes $k$ symbols
into $n$ symbols, and the rate is $R=k/n\in[0,1]$. A codeword $c\in
RM_d$ can be indexed by integers as $c=(c_1,\dots,c_n)$ or by points
as $c=(c_{\vP_1},\dots,c_{\vP_n})$, where $c_i=c_{\vP_i}$.

Assuming $d<q$, we now recall how $\RM_d$ achieves a locality of
$\ell=q-1$ as follows. Suppose that $c=\ev(F)\in \RM_d$ is a codeword,
and that $c_j=c_{\vP_j}$ is looked for. Then, the local decoding
algorithm randomly picks a non-zero vector $\vV\subset \F_q^m\setminus
\set 0$ and considers the line $D$ of direction $\vV$ passing through
$\vP_j$:
\begin{align*}
  D&=\left\{\vP_j+t\cdot \vV\; \mid t\in
    \F_q\right\}
  =\left\{\vP_j+0\cdot \vV,\vP_j+\alpha_1\cdot \vV, \dots, \vP_j+\alpha_{q-1}\cdot \vV\right\}\\
  &=\left\{\vR_0=\vP_j,\dots,\vR_{q-1}\right\}\subset \F_q^m.
\end{align*}
Then, the points $\vR_1,\dots,\vR_{q-1}$ are sent as queries, and the
decoding algorithm receives the answer:
\[
\left(y_{\vR_1},\dots,y_{\vR_{q-1}}\right)\in \F_q^{q-1}.
\]
In case of no errors,
$\left(y_{\vR_1},\dots,y_{\vR_{q-1}}\right)=\left(c_{\vR_1},\dots,c_{\vR_{q-1}}\right)$. Now
\begin{align*}
  c_{\vR_u}&=F(\vP_j+\alpha_u\cdot \vV)=\restrict
  F{\vP,\vV}(\alpha_u), \; u=1,\dots,q-1,
\end{align*}
where 
\begin{equation}
\restrict F{\vP,\vV}=F(\vP+T\cdot \vV)\in\F_q[T]\label{eq:defFPV}
\end{equation}
is the restriction of $F$ to the line $D$, which is a univariate
polynomial of degree less than or equal to $d$. That is,
$\left(c_{\vR_1},\dots,c_{\vR_{q-1}}\right)$ belongs to a Reed-Solomon
code $\RS_d$ of length $q-1$ and dimension $d+1$. In case of
errors, $\left(y_{\vR_1},\dots,y_{\vR_{q-1}}\right)$ is a noisy
version of it. Using a decoding algorithm of $\RS_d$, one can recover
$F_{\vP,\vV}$, and then $c_{\vP_j}$ is found as
$c_{\vP_j}=F_{\vP,\vV}(0)$. 

The main drawback of these codes is the condition $d<q$, which imposes
a dimension $k=\binom {d+m}{m} <\binom{q+m}m\sim q^m/m!$. For a fixed
alphabet $\F_q$, the rate $R=k/q^m<1/m!$ goes to
zero very fast when the codes get longer.

\subsection{Multiplicity codes and their local decoding}\label{decloc}
To obtain codes with higher rates, we need a derivation order $s>0$
and an extended notion of evaluation. There are
$\sigma=\binom{m+s-1}{m}$ Hasse derivatives $\Hasse F\vi$ of a
polynomial $F$ for multi-indices $\vi$ such that $\size\vi<s$.
Letting $\Sigma=\F_q^\sigma$, we generalize the evaluation map at a
point $\vP$:
\[
\ev^s_\vP:\begin{array}[t]{rcl}
  \F_q[\vX]&\rightarrow& \F_q^{\sigma}\\
   F&\mapsto &\left(\Hasse F \vv(\vP)\right)_{\size \vv < s}
\end{array}
\]
and, given an enumeration of the points as in \refeq{eq:enum:Fqm}, the
total evaluation rule is
\[
\ev^s:\begin{array}[t]{rcl}
  \F_q[\vX]&\rightarrow& \Sigma^n\\
F&\mapsto
  & \left(\ev^s_{\vP_1}(F),\dots,\ev^s_{\vP_n}(F)\right).
\end{array}
\]
Given $y=\ev^s_\vP(F) \in\Sigma$, we denote by $y_{\vv}$ the
coordinate of $y$ corresponding to the $\vv$-th derivative of $F$.  As
in the case of classical Reed-Muller codes, we denote by
$(c_1,\ldots,c_n)=(c_{\vP_1},\ldots,c_{\vP_n})=\ev^s(F)$,
i.e. $c_i=c_{\vP_i}=\ev^s_{\vP_i}(F)$.  We can consider $\F_q[\vX]_d$,
with $ d<s(q-1)$~\cite{KopSarYek2011}, and the corresponding code is
\[
\MRM^s_{d}=\left\{\ev^s(F)\;\mid F\in \F_q[\vX]_d\right\}.
\]
Using the language of locally decodable codes, we have a code
$\MRM^s_d:\Delta^k\rightarrow \Sigma^n$, with $\Delta=\F_q$, and
$\Sigma=\F_q^\sigma$. The code $\MRM^s_d$, is a $\F_q$-linear space,
whose dimension over $\F_q$ is $k=\binom{m+d}d$.  Its rate is
$R=(\log_q \size{\F_q[X]_d})/(\log_q \size{\Sigma^n})={k}/({\sigma n})=\binom{m+d}m/\left(\binom{m+s-1}{m}\cdot q^{m}\right).$
Its minimum distance is (from Generalized Schwartz-Zippel Lemma)
$q^m-\frac dsq^{m-1}$. 

This family of codes has a locality of
$(q-1)\sigma=(q-1)\binom{m+s-1}{m}$ queries. Here is how the local
decoding algorithm works. Let $j$ be the index of the point where we
want to local decode, i.e. $c_j=c_{\vP_j}$ is looked for.  The
algorithm randomly picks $\sigma$ vectors $\vU_i\in
\F_q^m\setminus\set{0}$, $i=1,\dots,\sigma$. For each $\vU_i$,
$i=1,\dots,\sigma$, consider the line of direction $\vU_i$ passing
through $\vP_j$:
\begin{align*}
  D_{i}
  &=\left\{\vP_j+0\cdot \vU_i,\vP_j+\alpha_1\cdot \vU_i, \dots, \vP_j+\alpha_{q-1}\cdot \vU_i\right\}\\
  &=\left\{\vR_{i,0}=\vP_j,\vR_{i,1},\dots,\vR_{i,q-1}\right\}\subset \F_q^m
\end{align*}
For each $i$, $1 \leq i \leq \sigma$, the algorithm queries the
received word at points $\vR_{i,1},\dots,\vR_{i,q-1}$, and gets the answers
\[
\left(y_{\vR_{i,1}},\dots,y_{\vR_{i,q-1}}\right)\in\Sigma^{q-1},
\]
thus a total of $(q-1)\sigma$ queries in $\F_q^m$, and $\sigma(q-1)$
answers from $\Sigma$. In case of no errors, we have
\[(y_{\vR_{i,b}})_\vv=\Hasse{F}{\vv}(\vR_{i,b}),
\quad b=1,\dots,q-1,
\]
where $(y_{\vR_{i,b}})_\vv$ is the $\vv$-th coordinate of
$y_{\vR_{i,b}}$, and, using \refeq{HasseUniv}, we can compute
\begin{equation}\label{eq:noerrors}
  \Hasse{\restrict{F}{\vP_j,\vU_i}}{e}(\alpha_b)=\sum_{\size \vv=e}\Hasse{F}{\vv}(\vR_{i,b})\vU_i^{\vv}
  \quad \left\{\begin{array}{l}1\leq b\leq q-1,\\  0\leq e <s\end{array}\right.
\end{equation}
Having the values $\Hasse{\restrict{F}{\vP_j,\vU_i}}{e}(\alpha_b)$, for $1 \leq b \leq q-1$ and
$\size{\vv}<s$, we can then recover $\restrict F{\vP_j,\vU_i}$ by
Hermite interpolation. Next we solve, for the indeterminates $\Hasse F
{\vv}(\vP_j)$, $\size\vv<s$, the linear system derived
from~\refeq{coeffeq}:
\[
\coeff {\restrict F{\vP_j,\vU_i}} e=\sum_{\size \vv=e} \Hasse F
{\vv}(\vP_j)\vU_i^{\vv}\quad \left\{\begin{array}{l}
e=0,\dots,s-1,\\ i=1,\dots\sigma,
\end{array}\right.
\]
and we output 
$\left\{\Hasse F {\vv}(\vP_j), \;\size{\vv}<s\right\}=\ev^s_{\vP_j}(F)$.

In case of errors, for each direction $\vU_i$, we define a function
$h_i: \F_q^* \rightarrow \F_q^{\{0,\ldots,s-1\}}$, $\alpha_b \mapsto
h_{i}(\alpha_b)$, such that
\begin{equation}\label{eq:h_i:alpha_n:e}
(h_{i}(\alpha_b))(e)=\sum_{\size \vv=e}(y_{\vR_{i,b}})_\vv\vU_i^{\vv},
\quad \left\{\begin{array}{l}1\leq b\leq q-1\\  0\leq e <s\end{array}\right.
\end{equation}
By virtue of~\refeq{eq:noerrors}, note that
$h_{i}(\alpha_b)(e)$ is  the (erroneous) $e$-th Hasse derivative
of $\restrict{F}{\vP_j,\vU_i}$ at $\alpha_b$.

Having $h_{i}(\alpha_b)(e)$ for all $e \in \{0,\ldots,s-1\}$ and all
$b \in \{1,\ldots,q-1\}$, $\restrict{F}{\vP_j,\vU_i}$ is recovered
using a decoding algorithm of univariate multiplicity codes
(see~\cite{KopSarYek2011}), provided
$d(\ev^s(\restrict{F}{\vP_j,\vU_i}),h_i) \leq
\frac{(q-1)-d/s}{2}$. Once we have recovered $\restrict
F{\vP_j,\vU_i}$, we solve for the indeterminates $\Hasse F
{\vv}(\vP_j)$, $\size\vv<s$, the linear system derived
from~\refeq{coeffeq}:
\begin{equation}\label{eq:erroneous:lin}
\coeff {\restrict F{\vP_j,\vU_i}} e=\sum_{\size \vv=e} \Hasse F
{\vv}(\vP_j)\vU_i^{\vv}  \quad \left\{\begin{array}{l}
t=0,\dots,s-1,\\ i=1,\dots\sigma
\end{array}\right.
\end{equation}
and we output $ \left\{\Hasse F {\vv}(\vP_j),\; \size{\vv}<s\right\}=\ev^s_{\vP_j}(F) $. This local decoding
algorithm is sketched in Alg~\ref{algo:localMRM}. In case of more than
$\frac{(q-1)-d/s}{2}$ errors in some directions, the linear
system~\ref{eq:erroneous:lin} may have erroneous equations. In this
case, due to lack of space, we refer the reader
to~\cite{KopSarYek2011}.

\begin{algorithm}
\begin{algorithmic}[1]
  \Require Oracle Access to $y=\vect{y_1,\dots,y_n}$, a noisy version
  of $c=\ev^s(F)\in\MRM_d$.
  \renewcommand{\algorithmicrequire}{\textbf{Input:}} \Require $j\in
  [n]$, the index of the symbol $c_j$ looked for in $c$
  \renewcommand{\algorithmicrequire}{\textbf{Output:}} \Require
  $c_j=c_{\vP_j}=\ev^s_{\vP_j}(F)$ 

  \State Pick distinct $\sigma$ non zero random vectors
  $\vU_1,\dots,\vU_\sigma$ giving $\sigma$ different lines
 \For{i=1 to $\sigma$} \label{pick_lines}

  \State Consider the line
\[
  D_i
  =\left\{\vP_j+0\cdot \vU_i, \vP_j+\alpha_1\cdot \vU_i, \dots, \vP_j+\alpha_{q-1}\cdot \vU_i\right\}=\left\{\vR_{i,0},\dots,\vR_{i,q-1}\right\}
\]
\State Send $\vR_{i,1},\dots,\vR_{i,q-1}$, as queries,
\State\label{identify} Receive the answers: $y_{\vR_{i,1}},\dots,y_{\vR_{i,q-1}}$, $y_{\vR_{i,b}} \in \F_{q}^{\sigma}$.

\State \label{recover} Recover $\restrict F{\vP_j,\vU_i}$ from
$(y_{\vR_{i,1}},\dots,y_{\vR_{i,q-1}})$ using a univariate decoding
algorithm on the values $(h_{i}(\alpha_b))(e)$ defined
in~\refeq{eq:h_i:alpha_n:e}.
\EndFor

\State\label{solve}Solve for the indeterminates $\Hasse F {\vv}(\vP_j)$,
$\size\vv<s$, the linear system~\ref{eq:erroneous:lin}.

\State\Return 
$
\left\{\Hasse F {\vv}(\vP_j), \;\size{\vv}<s\right\}=\ev^s_{\vP_j}(F)
$.
\end{algorithmic}
\caption{Local decoding algorithm for Multiplicity Codes}
\label{algo:localMRM}
\end{algorithm}

\begin{figure}[h]
\begin{center}
\includegraphics[height=.22\textwidth]{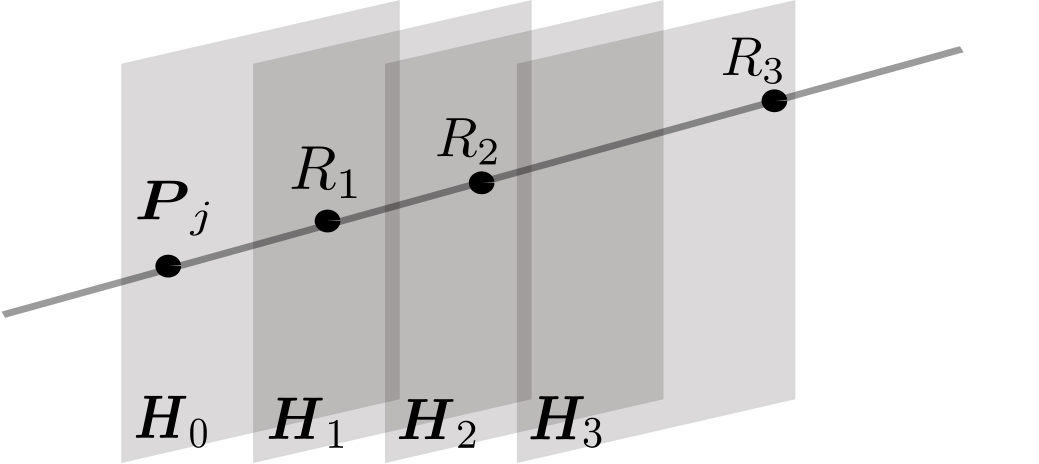}
\includegraphics[height=.22\textwidth]{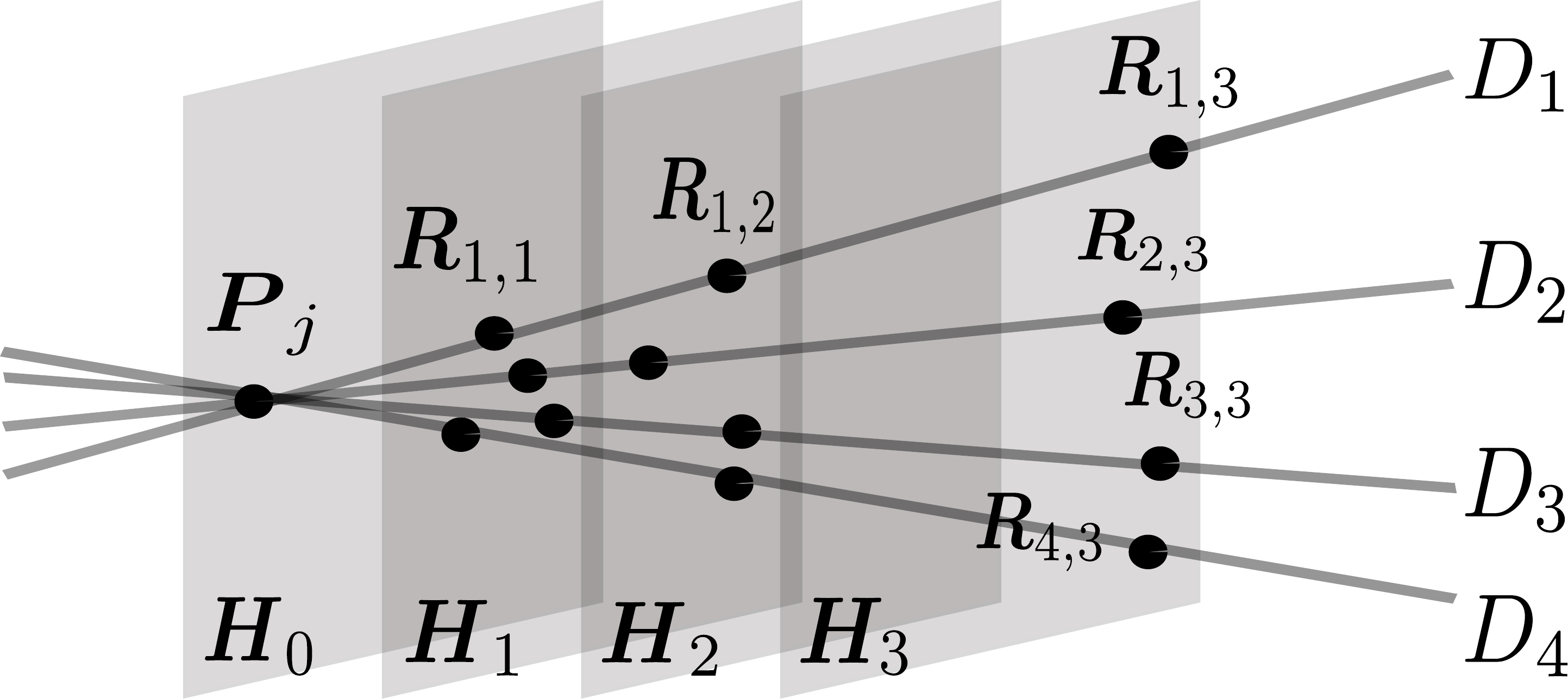}
\end{center}
\caption{Transversal lines for simple Reed-Muller codes (a), for
  Multiplicity Codes (b), assuming that the point $\vP_j$
  corresponding to query $j$ lies on the $H_0$ hyperplane. Parameters
  are $q=4$, $m=3$, $s=2$, $\sigma=4$. Not all point names are displayed
  for readability.}
\label{fig:transversal}
\end{figure}

\section{Hyperplane partitions and their use in PIRs} \label{hyp}

\subsection{Affine hyperplanes and servers}
Considering $\MRM^s_d$, we show how to equally share a codeword
\[
c=\ev^s(f)=\vect{\ev^s_{\vP_1}(f),\dots,\ev^s_{\vP_n}(f)}
\]
on $\ell=q$ servers, using the geometry of $\F_q^m$. This is done as
follows: consider $H$ a $\F_q$-linear subspace of $\F_q^m$ of
dimension $m-1$. It can be seen as the kernel of a linear map
\[
f_H:\begin{array}[t]{rcl}
\F_q^m&\rightarrow& \F_q\\
(x_1,\dots,x_m)&\mapsto&h_1x_1+\cdots h_mx_m
\end{array}
\]
for some $(h_1,\dots,h_m)\in\F_q^m\setminus\set{0}$.  Now $\F_q^m$ can
be split as the disjoint union of affine hyperplanes $ \F_q^m=H_0\cup
H_1\cup\cdots \cup H_{q-1} $, where
\[
H_i=\set{\vP\in\F_q^m\;\mid \; f_H(\vP)=\alpha_i},\quad i=0,\dots,q-1.
\]
As a simple example, consider the $\F_q$-linear hyperplane $H$ of
$\F_q^m$: 
\[H=\set{P=\vect{x_1,\dots,x_m}\;\mid x_m=0}.
\]
Then we have
$\F_q^m=H_0\cup H_1\cup\dots H_{q-1}$ where
\[
H_i=\set{\vP=\vect{x_1,\dots,x_m}\in\F_q^m\;\mid x_m=\alpha_i}, \quad
i=0,\dots,q-1.
\]
Up to a permutation of the indices, we can write any codeword 
\linebreak$
c=\vect{c_{H_0}|\cdots|c_{H_{q-1}}}
$,
where
\[
c_{H_i}=\vect{\ev^s_{\vP}(f)}_{\vP \in H_i}, \; i=0\dots,q-1.
\]
Now consider an affine line, which is \emph{transversal} to all the
hyperplanes. It is a line which can be given by any direction
$\vU\in\F_q^m\setminus\set0$ such that $f_H(\vU)\neq0$, and which
contains a point $\vP$:
\[
D=\set{\vP+t\cdot \vU\; \mid t\in \F_q}.
\]
In other words, it is  a line not contained in any
of the hyperplane $H_0,\dots,H_{q-1}$. Then, 
\[
D\cap H_j=\set{\vQ_j}, \quad j=0,\dots,q-1,
\]
for some points $\vQ_0,\dots,\vQ_{q-1}$. Now, as long as $\vU_i$,
$i=1,\dots,\sigma$, does not belong to $H$,
Algorithm~\ref{algo:localMRM} works, using the points
$\{\vQ_{i,j}\}_{0 \leq j \leq q-1}$, where $D_i\cap
H_j=\{\vQ_{i,j}\}$, $D_i$ being the line with direction $\vU_i$
passing through $\vP_j$, one query being a fake one (see
section~\ref{LDCinPIR} below).
\begin{figure}[h]
\includegraphics[height=.22\textwidth]{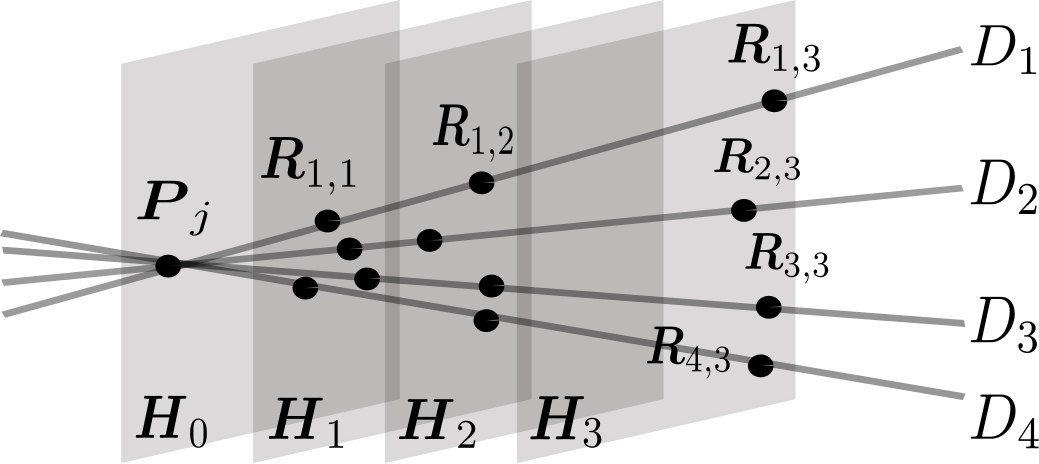}
\includegraphics[height=.22\textwidth]{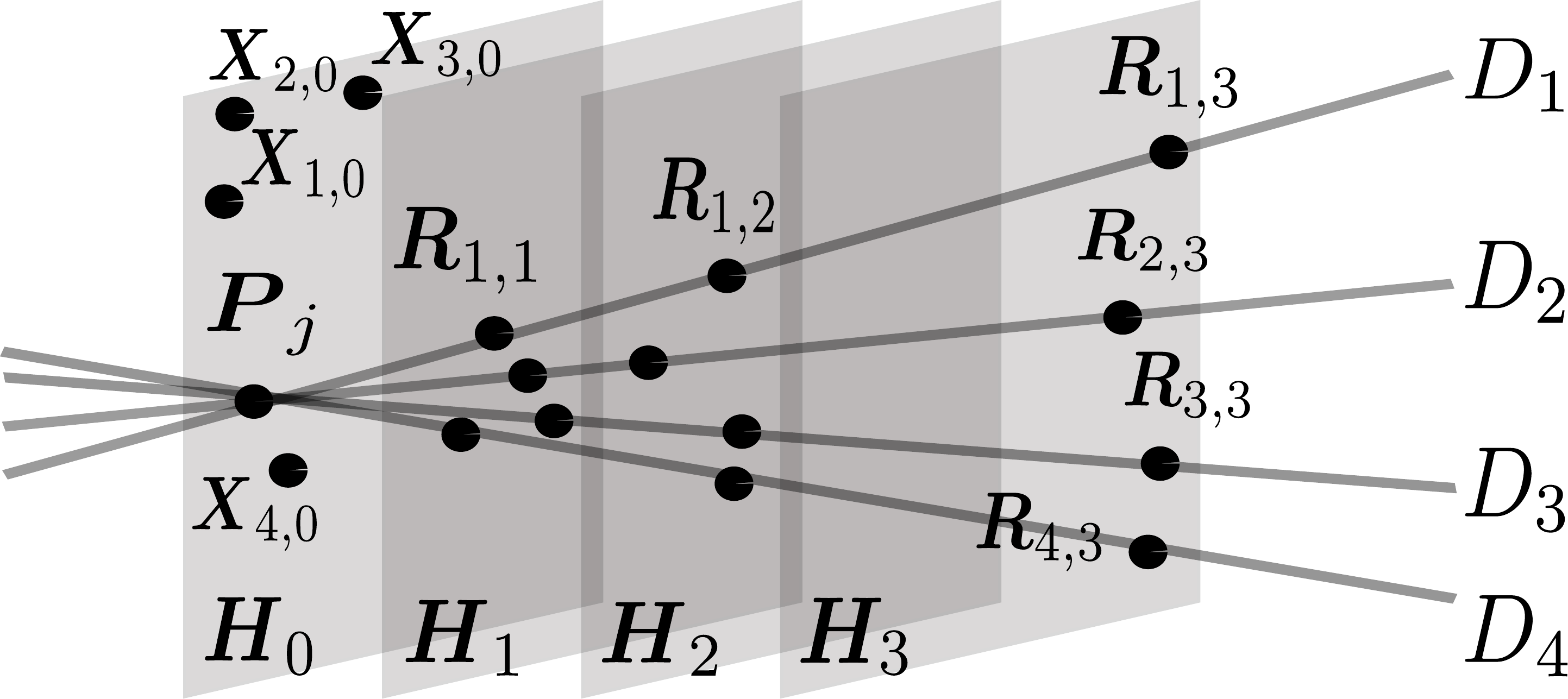}
\caption{Parameters are $q=4$, $m=3$, $s=2$, $\sigma=4$. Queries for a
  Multiplicity code used as an LDC codes (a), used in PIR
  scheme (b), assuming that the point $\vP_j$ corresponding to query
  $j$ lies on the $H_0$ hyperplane.  In the PIR scheme, random points
  $\vX_{1,0},\dots,\vX_{4,0}$ are sent to the server $S_0$ to hide 
  him the fact that he hosts the index of the request. Not all
  point names are displayed for readability.}
\label{fig:ldc-in-pir}
\end{figure}

\subsection{Use in PIR schemes} \label{LDCinPIR} Given $\F_q^m=H_0\cup
H_1\cup\dots H_{q-1}$, the PIR scheme can be built by requiring that,
for $i=1,\dots,q$, Server $S_i$ is given $c_{H_i}$ to store. Local decoding must be done using transversal lines. The user will
first select $\sigma$ transversal lines $D_i$, $i=1,\dots,\sigma$,
which passes through the point $\vP_j$ which corresponds to the
requested symbol, and query each server $S_i$ at the point $D\cap
H_i$.  In algorithms~\ref{algo:localMRM}, \ref{pick_lines}, the main
and only change is to make sure that all lines under consideration are
indeed transversal to the chosen hyperplanes.   We here explain how
this works: the code requires $(q-1)$ queries along each line. In our
context, when $\vP_j$ is requested, all $\sigma$ lines have to pass
through $\vP_j$. For a direction $\vU_i$, the queries sent to the
servers correspond to $q-1$ points on the line $D_i$ defined by
$\vU_i$, those points being all different from $\vP_j$. Assume for
instance that $\vP_j=(x_1,\dots,x_m)$ with $x_m=\alpha_u$, for some
$u$. Query $\vP_j$ must not be sent to server $S_u$ who stores the
$c_{H_u}$ part of the encoded word: $S_u$ would then know that it has
the index of the requested coordinate among its possibly queried
indices.  A solution to this problem is to send $\sigma$ fake
(i.e. random) queries $X_{i,u}$, $i=1,\dots,\sigma$, to server $S_u$,
see Fig~\ref{fig:ldc-in-pir}. This is enough to obfuscate server
$S_u$.  See Algorithm~\ref{algo:PIRtransverse}.

\begin{algorithm}[h]
\begin{algorithmic}[1]
  \renewcommand{\algorithmicrequire}{\textbf{Preprocessing Phase:}} \Require
  The user: \State chooses $q,m,d,s$ so that the original data $x$ of
  bit-size $k$ can be encoded using $\MRM^s_d(q)$, i.e.\ parameters
  such that $\binom{m+d}{d}\log_2q \geq k$; 

  \State encodes the data $x$ into the codeword $c=\ev^s(F)$, where
  the coefficients of $F$ represent the original data $x$; 

  \State sends each server $S_{\ell}$ the $c_{H_{\ell}}$ part of the
  codeword.  \renewcommand{\algorithmicrequire}{\textbf{Online
      Protocol:}}

  \Require To recover $c_{j}=ev^s_{\vP_j}(F)$ for an index $j \in [n]$, the
  user:

  \State User selects $\sigma$ distinct lines $D_i, 1 \leq i \leq
  \sigma$,  transversal to the hyperplanes, and passing through
  $\vP_j$;

  \State Let $\ell_j$ be such that $D_i \cap H_{\ell_j}=\vP_j,$
  $i=1,\dots,\sigma$

  \State For $1 \leq \ell \leq q$, $\ell \neq \ell_j$, user sends the
  queries $\{D_i \cap H_{\ell}=R_{i,\ell}\}_{1 \leq i \leq \sigma}$ to
  server $\ell$. 

  \State User sends $\sigma$ random queries $X_{i,\ell_j}$,
  $i=1,\dots,\sigma$ to server $S_{\ell_j}$; \State For $1 \leq \ell \leq q$,
  server sends the answers $\{{y_{R_{i,\ell}}}\}_{1 \leq i \leq
    \sigma}$. Answers $\{y_{R_{i,u}}\}_{1 \leq i \leq \sigma}$ are discarded by the user. 
\State User then proceeds as in steps \ref{identify} to
  \ref{solve} of algorithm~\ref{algo:localMRM} to retrieve
  $\ev^s_{\vP_j}(F)$.

\end{algorithmic}
\caption{PIR Protocol from transversal lines on hyperplanes}
\label{algo:PIRtransverse}
\end{algorithm}

\section{Analysis of the protocol given in algorithm \ref{algo:PIRtransverse}}

\subsection{Overall storage overhead}\label{subsec:overhead}
The natural reduction from locally decodable codes to information
theoretically secure private information retrieval schemes leads to
two overheads: the first one is $1/R$ where $R$ is the rate of the
code used for encoding the data, the second one is $\ell$, where
$\ell$ is the number of servers. The total overhead is thus
$\ell\cdot1/R$. Our scheme has an overhead of only $1/R$, which is the
natural overhead of the code.  With respect to the amount of storage
required in each server for encoding $k$ symbols, only $k/Rq$ symbols
are required per server. In particular, when $R\geq 1/q$, each server
stores less than $k$ symbols, which is the amount of information
without redundancy. 

\subsection{Communication complexity}
We count the communication complexity in terms of the number of
exchanged bits during the online protocol, discounting the
preprocessing phase.  The user has to send $\sigma$ points to each
server $S_j$, $j=1,\dots,q$. A point consists in $m$ coordinates in
$\F_q$, but since it belongs to an hyperplane, it can be specified
with $(m-1)$ coordinates, i.e.\ $(m-1)\log_2q$ bits. Thus
$\sigma(m-1)\log_2 q$ bits are sent to each server $S_j$, $1\leq j\leq
q$, for a total of $q\sigma(m-1)\log_2 q$.  For his response, each
server sends $\sigma$ field elements for each of the $\sigma$ points
it receives in the query: $\sigma^2$ field elements, i.e.\
$\sigma^2\log_2q$ bits, and thus a total of $q\sigma^2\log_2q$ bits
for all the servers. The overall communication complexity for the
queries and the answers is $q\sigma(m-1)\log_2
q+q\sigma^2\log_2q=(m-1+\sigma)q\sigma\log_2q =O(q\sigma^2\log_2q)$,
since $m\leq \sigma$ as soon as $s> 1$.  





\begin{figure}
\begin{center}
\begin{tabular}{|c|c|c|c|c|c|c|c|c|c|c|}
\hline
\multicolumn{5}{|c|}{Parameters}&\multicolumn{2}{c|}{Locality}&\multicolumn{2}{c|}{Storage
overhead}&\multicolumn{2}{c|}{Comm.\ complexity}\\
 \hline 
$q$ & $m$ & $s$ & $d$ &$k$ & $\sharp$ queries & $\sharp$ servers & std & ours & std & ours\\
 \hline\hline\ 16 & 2 & 1 & 14 & 120 & 15 & 16 & 32 & 2.1 & 180 & 128 \\
 16 & 2 & 2 & 29 & 465 & 45 & 16 & 25 & 1.7 & 900 & 768 \\
 16 & 2 & 3 & 44 & 1035 & 90 & 16 & 22 & 1.5 & 2880 & 2688 \\
 16 & 2 & 4 & 59 & 1830 & 150 & 16 & 21 & 1.4 & 7200 & 7040 \\
 16 & 2 & 5 & 74 & 2850 & 225 & 16 & 20 & 1.3 & 15300 & 15360 \\
 16 & 2 & 6 & 89 & 4095 & 315 & 16 & 20 & 1.3 & 28980 & 29568 \\
 \hline 
 16 & 3 & 1 & 14 & 680 & 15 &16 & 90 & 6.0 & 240 & 192 \\
 16 & 3 & 2 & 29 & 4960 & 60 & 16 & 50 & 3.3 & 1680 & 1536 \\
 16 & 3 & 3 & 44 & 16215 & 150 & 16 & 38 & 2.5 & 7800 & 7680 \\
 16 & 3 & 4 & 59 & 37820 & 300 & 16 & 32 & 2.2 & 27600& 28160 \\
 16 & 3 & 5 & 74 & 73150 & 525 & 16 & 29 & 2.0 & 79800 &82880 \\
 16 & 3 & 6 & 89 & 125580 & 840 & 16 & 27 & 1.8 & 198240 &207872 \\
 \hline 
 16 & 4 & 1 & 14 & 3060 & 15 & 16 & 320 & 21 & 300 &256 \\
 16 & 4 & 2 & 29 & 40920 & 75 & 16 & 120 & 8.0 & 2700 & 2560 \\
 16 & 4 & 3 & 44 & 194580 & 225 & 16 & 76 & 5.1 & 17100 & 17280 \\
 16 & 4 & 4 & 59 & 595665 & 525 & 16 & 58 & 3.9 & 81900 & 85120 \\
 16 & 4 & 5 & 74 & 1426425 & 1050 & 16 & 48 & 3.2 & 310800 & 327040 \\
 \rowcolor{grey}16 & 4 & 6 & 89 & 2919735 & 1890 & 16 & 42 & 2.8 & 982800 & 1040256 \\
 \hline \hline 
 256 & 2 & 1 & 254 & 32640 & 255 & 256 & 510 & 2.0 & 6120 & 4096 \\
 256 & 2 & 2 & 509 & 130305 & 765 & 256 & 380 & 1.5 & 30600 & 24576 \\
 256 & 2 & 3 & 764 & 292995 & 1530 & 256 & 340 & 1.3 & 97920 & 86016 \\
 256 & 2 & 4 & 1019 & 520710 & 2550 & 256 & 320 & 1.3 & 244800 & 225280 \\
 256 & 2 & 5 & 1274 & 813450 & 3825 & 256 & 310 & 1.2 & 520200 & 491520 \\
 256 & 2 & 6 & 1529 & 1171215 & 5355 &256 & 300 & 1.2 & 985320 & 946176 \\
 \hline 
 \rowcolor{grey}256 & 3 & 1 & 254 & 2796160 & 255 & 256 & 1500 & 6.0 & 8160 & 6144 \\
 256 & 3 & 2 & 509 & 22238720 & 1020 & 256 & 770 & 3.0 & 57120 & 49152 \\
 256 & 3 & 3 & 764 & 74909055 & 2550 & 256 & 570 & 2.2 & 265200 & 245760 \\
 256 & 3 & 4 & 1019 & 177388540 & 5100 & 256 & 480 & 1.9 & 938400 & 901120 \\
 256 & 3 & 5 & 1274 & 346258550 & 8925 & 256 & 430 & 1.7 & 2713200 & 2652160 \\
 256 & 3 & 6 & 1529 & 598100460 & 14280 & 256 & 400 & 1.6 & 6740160 & 6651904 \\
 \hline 
 256 & 4 & 1 & 254 & 180352320 & 255 & 256 & 6100 & 24 & 10200 & 8192 \\
 256 & 4 & 2 & 509 & 2852115840 & 1275 & 256 & 1900 & 7.5 & 91800 & 81920 \\
 256 & 4 & 3 & 764 & 14382538560 & 3825 & 256 & 1100 & 4.5 & 581400 & 552960 \\
 256 & 4 & 4 & 1019 & 45367119105 & 8925 & 256 & 840 & 3.3 & 2784600 & 2723840 \\
 256 & 4 & 5 & 1274 & 110629606725 & 17850 & 256 & 690 & 2.7 & 10567200 & 10465280 \\
 256 & 4 & 6 & 1529 & 229222001295 & 32130 & 256 & 600 & 2.4 & 33415200 & 33288192 \\
 \hline
\end{tabular}
\end{center}
\caption{Properties of our scheme for $q=16$ and $q=256$. We have to
distinguish LDC-locality (i.e. $\sharp$ queries) and PIR-locality (i.e. $\sharp$ servers), since they are not the same
using our construction. The storage overhead is the global overhead
among all the servers: in the standard case, using the standard LDC to
PIR reduction as in Lemma~\ref{lemma:LDC2PIR}, it is $(q-1)/R$; in our
case, using partitioning on the servers, it is $1/R$, $R$ being the
rate of the code. Similarly the communication complexities (in bits)
are shown. The degree $d$ has been chosen to be $d=s(q-1)-1$, the
maximum possible value, with no correction capability.}
\label{fig:params}
\end{figure}

\subsection{PIR-locality} Our construction leads to introduce the
notion of ``PIR-locality'': when an LDC code admits a nice layout as multiplicity codes do, the
number of servers can be smaller than the locality of the code.  We
call this the \emph{PIR-locality}. Here the (LDC-)locality,
i.e. the number of queries, is $(q-1)\sigma$, while the PIR-locality, i.e. the number of servers, is $q$. 
The tables show the obtained parameters for $q=256$ and $q=16$
in~\reffig{fig:params}.  We can see that the rate and LDC-locality of the
code grow with $s$, while the PIR-locality is constant for a fixed
$q$. The global storage overhead is much smaller, and the
communication complexities are very similar.

\subsection{Robustness of the protocol}
Algorithm~\ref{algo:localMRM} involves $\sigma$ applications of
decoding of univariate multiplicity codes of length
$q-1$. From~\cite{KopSarYek2011}, we can decode if the  word
$y_i=(y_{R_{i,1}},\dots,y_{R_{i,q-1}})$ is $t$-far from a codeword
${\ev^s(F)}$, for a polynomial $F \in \F_q[X_1]_d$, where
$t=1/2(q-1-d/s)$.  The received word $y_i$ corresponds to the answers
of the $q-1$ servers (all $q$ servers except server $u$) for direction
$\vU_i$. Tolerating $\nu=\lfloor t\rfloor$ errors here means that
$\nu$ servers can answer wrongly. Thus, following the terminology of
Beimel and Stahl~\cite{JOC07}, our protocol is a $\nu$-Byzantine
robust protocol.

We sum up features of the protocol presented in
Algorithm~\ref{algo:PIRtransverse} in the following

\begin{theorem}
  Let $q$ be a power of a prime, $m, s \in \N^*$, and $d$ be an
  integer with $d<s(q-1)$. Set $\sigma=\binom{m+s-1}{m}$, with the
  constraint $\sigma \leq (q^m-1)/(q-1)$.  Protocol from
  Algorithm~\ref{algo:PIRtransverse} has:
\begin{itemize}
\item LDC-locality (i.e. number of queries) $\sigma(q-1)$;
\item PIR-locality (i.e. number of servers) $\ell=q$;
\item Communication complexity $(m-1+\sigma)q\sigma\log_2q$ bits;
\item Storage overhead $1/R$, where $R=\binom{m+d}{m}/(q^m\sigma)$ is
  the rate of the underlying multiplicity code;
\item $\nu$-Byzantine robustness, where $\nu=\lfloor 1/2(q-1-d/s)\rfloor$, in the sense that it can tolerate up to $\nu$ servers answering wrongly.
\end{itemize}

\end{theorem}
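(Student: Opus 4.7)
The proof plan is to verify each item in the theorem by reading off the parameters of Algorithm~\ref{algo:PIRtransverse} and invoking the analysis already performed in Sections~\ref{codes} and the preceding subsections. Since each claim has already appeared informally in the exposition, the task is essentially one of assembly: state the count, cite the algorithmic step where it arises, and dispatch Byzantine robustness via the correction radius of univariate multiplicity codes established in~\cite{KopSarYek2011}.

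For the first three bullets, I would argue as follows. The LDC-locality is obtained by inspecting Algorithm~\ref{algo:localMRM}: for each of the $\sigma$ directions $\vU_i$ picked at line~\ref{pick_lines}, exactly $q-1$ points $\vR_{i,1},\dots,\vR_{i,q-1}$ are queried, giving $\sigma(q-1)$ total queries, a count preserved verbatim by Algorithm~\ref{algo:PIRtransverse}. The PIR-locality is $q$ by construction, since the partition $\F_q^m = H_0\cup\cdots\cup H_{q-1}$ has exactly $q$ classes and each server $S_\ell$ stores precisely $c_{H_\ell}$. The communication complexity calculation is the one carried out in Section~\ref{subsec:overhead} (Communication complexity subsection): each query point, belonging to an affine hyperplane, is described by $m-1$ coordinates in $\F_q$, so the $\sigma$ queries sent to each of the $q$ servers amount to $q\sigma(m-1)\log_2 q$ bits, while each server answers $\sigma$ elements of $\Sigma = \F_q^\sigma$, contributing $q\sigma^2\log_2 q$ bits; summing yields $(m-1+\sigma)q\sigma\log_2 q$.

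For the storage overhead, I would remark that the partition $\{c_{H_0},\dots,c_{H_{q-1}}\}$ of the codeword is exact and non-overlapping, so the sum of the storages on all $q$ servers equals the total codeword length in bits, namely $1/R$ times the information size $k\log_2 q$; hence the global overhead is $1/R$.

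The core technical point, and the main obstacle, is the Byzantine robustness. Here I would make precise the observation that a single Byzantine server $S_j$ affects at most \emph{one} coordinate of the received vector $y_i = (y_{\vR_{i,1}},\dots,y_{\vR_{i,q-1}})$ associated with any direction $\vU_i$: indeed, each line $D_i$ is transversal to the hyperplane partition, so $D_i$ meets each $H_j$ in exactly one point, which is the only query $S_j$ sees along that line. Consequently, if at most $\nu$ servers are Byzantine, then every $y_i$ is $\nu$-far from the legitimate codeword $\ev^s(\restrict{F}{\vP_j,\vU_i})$ of the univariate multiplicity code of length $q-1$. Since $\nu = \lfloor \tfrac{1}{2}(q-1-d/s)\rfloor \leq t$, the univariate decoder of~\cite{KopSarYek2011} recovers $\restrict{F}{\vP_j,\vU_i}$ correctly for \emph{every} direction $\vU_i$; the final linear system~(\ref{eq:erroneous:lin}) is then built from $\sigma$ error-free equations, hence admits the correct solution $\{\Hasse F \vv(\vP_j)\}_{|\vv|<s}$. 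This yields the claimed $\nu$-Byzantine robustness and completes the theorem. \qed
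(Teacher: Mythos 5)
Your proof is correct and follows essentially the same route as the paper, which establishes each bullet in the subsections immediately preceding the theorem (storage overhead, communication complexity, PIR-locality, and robustness). In particular, your key observation for the Byzantine item --- that a transversal line meets each affine hyperplane in exactly one point, so each Byzantine server corrupts at most one coordinate of each received vector $y_i$, which is then within the radius $t=\tfrac{1}{2}(q-1-d/s)$ of the univariate multiplicity decoder --- is exactly the argument the paper makes in its robustness subsection.
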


\section{Discussing parameters}
\subsection{Impact of the Byzantine robustness on the storage
  overhead}
Expressing $d$ in terms of $t$ for a given $s$ gives $d=(q-1)s-2st$,
which then gives a rate, say $R_t$, to be compared with the rate $R$
found for $d=s(q-1)-1$, when no error can be tolerated. For small $m$ and
$s(q-1)$ large enough, we have a relative loss:
\begin{align*}
  R_t/R = \frac{\binom{s(q-1)-2st+m}{m}/\sigma q^m }{ \binom{s(q-1)+m-1}{m}/\sigma q^m}
 & \approx \left( \frac{(q-1-2t +m/s) }{ (q-1+(m-1)/s) }\right) ^m
\end{align*}
For $m=s=1$, we find $(q-2t)/(q-1)$, which is almost the rate of the $t$-error
correcting classical Reed-Solomon code. Otherwise, we get, for small $t$
\[
R_t/R \approx \left(1-\frac{2t-1/s}{q-1+(m-1)/s}\right)^m 
\]
For $t=1$ or 2 and $m$ small, the relative loss is not drastic.  But, if $t$ is
large, say $(q-1)/2$
\[
R_t/R  \approx (1/(sq))^m
\]
and the loss is bigger. 

\subsection{Choice of $q$} We discuss how the size of $q$ may be
chosen independently of the size of entries on the database.  Consider
a simple database, which is a table, with $E$ entries, each entry
having $S$ records, all of the same bit-size $b$.  I.e.\ the
total bit-size of the database is thus $N=E\cdot S\cdot b$. A
multiplicity code of $\F_q$-dimension $k$ enables to encode $k\log_2
q$ bits. Thus, to encode the whole database, we need $k\log_2 q\geq
N=E\cdot S\cdot b$.  If furthermore $a=b/\log_2 q$ is an integer,
then, to recover a record of size $b$, the user needs to apply the PIR
protocol $a$ times.  By definition of information theoretic PIR
schemes, Protocol.~\ref{algo:PIRtransverse} can be run any number of
times, with no information leakage. This implies that $q$ does not
need to have a special relationship with the original data.

For instance, imagine a database of 90\,000 IPV6 adresses. An IPV6
address consists in 128 bits addresses, i.e.\ 16 bytes. The database
has $E=90\,000$, $S=1$, $b=128$, and requires $90\,000\cdot 16= 144\,
0000$ bytes of storage.  We first design a PIR scheme using
$q=256=2^8$. Mapping a byte to an $\F_q$-symbol, we need a code of
$\F_q$-dimension at least $144\, 000$. From Table~\ref{fig:params},
using $m=3$, $s=1$, we find a code of $\F_q$-dimension
$2796160\sim2,7\cdot 10^6$, and expansion $6$. The LDC-locality is
255, and its PIR-locality is 256. The communication cost is 6144 bits.

But we could also use $q_0=2^4=16$. Then $144\,0000$
bytes require  $2\cdot144\, 0000=2,88\cdot10^6$
$\F_{q_0}$-symbols. From Table~\ref{fig:params}, with $m=4$ and $s=6$, we
find a code of $\F_{q_0}$-dimension $2919735\sim2.9\cdot 10^6$, and
expansion 2.8. Its LDC-locality is
$(q_0-1)\binom{4+6-1}{4}=15\cdot126=1890$ while its PIR-locality is
 $16$. This is better in many aspects
since less servers are needed, and a better rate is achieved. But the
communication cost is now 1040256 bits.

\section{Conclusion}
Starting from multiplicity codes, we have designed a layout of the
encoded data which leads to a new PIR scheme. It features a very small
PIR-locality and much smaller global
redundancy compared to PIR schemes naturally arising from LDCs, as
well as Byzantine robustness.  This layout is quite natural in the
context of multiplicity codes. A straightforward question, to be
investigated in a future work, is to construct layouts for other
locally decodable codes, like affine-invariant codes~\cite{GKS13} and
matching vector codes~\cite{Yekhanin08,Efremenko09}. This seems
feasible due to the very multidimensional and geometric nature of
these constructions.

\bibliographystyle{plain}
\bibliography{ldc}





\appendix

\section{Possible ranges for $d$} \label{dsq-1}
In order the encoding function $\ev^s$ to be injective, it is sufficient to choose $d<sq$. Indeed:
\[
\ev^s(f)=\ev^s(g) \Leftrightarrow \ev^s(f-g)=(0,\ldots,0), 
\]
which means that $f-g$ admits $sq^m$ zeroes, counting multiplicities. By Schwartz-Zippel lemma, we have:
\[
\sum_{P \in \F_q^m} {\rm mult}(f-g,P) \leq d q^{m-1}\]
that is here
\[
sq^m \leq dq^{m-1}
\]
Thus, if we want $f-g$ to be identically zero, it suffices that $d <sq$.

Now during the decoding phase, in the case of errors, one has to
perform Reed-Solomon with multiplicities decoding (indeed, $\sigma$
Reed-Solomon applications of decoding). In this case, the length of the Reed-Solomon
code is always $q-1$ as we have $q-1$ noisy evaluations of the
original polynomial $F$ on each line. In order such a Reed-Solomon
code to realize proper (i.e. injective) encoding, we need $d <
s(q-1)$, as shown below.
\[
\ev^s(f)=\ev^s(g) \Leftrightarrow \ev^s(f-g)=(0,\ldots,0), 
\]
i.e. $f-g$ admits $s(q-1)$ zeroes, where here $\ev^s(f)$ is the
encoding of a univariate degree $\leq d$ polynomial $f \in \F_q[X]$
with a Reed-Solomon code of length $q-1$ and multiplicity $s$. But a
univariate polynomial cannot have more zeroes, counted with
multiplicities, than its degree:
\[
\sum_{P \in \F_q^*} {\rm mult}(f-g,P) \leq d,\]
thus if we want $f-g$ to be identically zero, it suffices that $d < s(q-1)$.

\section{Decoding univariate multiplicity codes} \label{univ-mult}
When the number $m$ of variables is $1$, then the codes lead to
Reed-Solomon codes, also called derivative codes
in~\cite{GuruWang2013}. We briefly recall how to decode these codes,
using the so-called Berlekamp-Welch
framework~\cite{Gemmell-Sudan:IPL1992}. We consider univariate
polynomials in $\F_q[X]$. For $s>0$, we have $\Sigma=\F_q^s$, and
the code is the set of codewords of length $n=q-1$:
\[
\set{c=\ev^s(F)\mid\; F\in\F_q[X]_d}.
\]
Decoding up to distance $t$ is, for a given vector $y\in\Sigma^n$,
find all polynomials $F\in\F_q[X]_d$ such that
\[
d_\Sigma(\ev^s(F),y)\leq t
\]
where $d_\Sigma$ is the Hamming distance in $\Sigma^n$. We first look
for two polynomials $N,E\in\F_q[X]$ of degree $(sn+d)/2$ and
$(sn-d)/2$ respectively, as follows. Write the linear system of equations:
\[
\left\{
\begin{array}{rcl}
N(\alpha_i)&=&E(\alpha_i)\cdot y_{i,0}\\
\Hasse N 1(\alpha_i)&=&E(\alpha_i)y_{i,1}+\Hasse E 1 (\alpha_i) \cdot y_{i,0}\\
&\vdots\\
\Hasse N {s-1}(\alpha_i)&=&\sum_{j=0}^{s-1}  \Hasse E j(\alpha_i)\cdot y_{i,s-1-j}
\end{array}
\right.
\]
for $ i=1,\dots,n$, where the indeterminates are the coefficients of
$N$ and $E$.  This is a system of $sn$ homogeneous linear equations in
$(sn-d)/2+1+(sn+d)/2+1=sn+2$ unknowns. Thus a non-zero solution
$(N,E)$ always exists. Given any solution, $F$ can then be recovered
as $N/E$.
 
Assuming that $t=(n-d/s)/2$, we can show the correctness of this
algorithm: any univariate polynomial $F$ of degree $\leq d$, such
that $d_\Sigma(\ev^s(F),y)\leq t$ will satisfy $N-EF=0$ where $(N,E)$
is a solution of the above system. Indeed, for any $\alpha_u$ such
that $\ev^s_{\alpha_u}(F)=y_u$, the system is satisfied at $\alpha_u$,
and hence the polynomial $N-EF$ has a zero of multiplicity $s$ at
$\alpha_u$. Thus
\[
\sum_{i=1,\ldots,n} {\rm mult}(N-EF,\alpha_i) > (n-t)s=(sn+d)/2.
\]
But ${\rm deg}(N-EF) \leq {\rm max}\{(sn+d)/2,d+(sn-d)/2\}=(sn+d)/2$. Thus $N-EF$, having more zeroes than its degree, is identically zero.

\end{document}

\section{Local decoding of Reed-Muller codes}
\label{sec:localRM}
\begin{center}\begin{algorithmic}[1]
\Require Oracle Access to $y=\vect{y_1,\dots,y_n}$, a noisy version of
  $c=\ev(F)\in\RM_d$.
\renewcommand{\algorithmicrequire}{\textbf{Input:}}
\Require $j\in [n]$, the index of the symbol $c_j$ looked for in $c$
\renewcommand{\algorithmicrequire}{\textbf{Output:}}
\Require $c_j=c_{\vP_j}=\ev_{\vP_j}(F)$
  \State Randomly pick $\vU\in \F_q^m\setminus\set{0}$
  \State Consider the line
\begin{align*}
  D
  &=\left\{\vP_j+0\cdot \vU,\vP_j+\alpha_1\cdot \vU, \dots, \vP_j+\alpha_{q-1}\cdot \vU\right\}\\
  &=\left\{\vR_0,\dots,\vR_{q-1}\right\}
\end{align*}
\State Send $\vR_1,\dots,\vR_{q-1}$ as queries, 
\State Receive the answers:
$y_{\vR_1},\dots,y_{\vR_{q-1}}$,  

\State Identify $(y_{\vR_1},\dots,y_{\vR_{q-1}})$ as a noisy version of a
Reed-Solomon codeword 
\[
c=\left(\restrict F {\vP,\vV}(\alpha_1),\dots,\restrict F
  {\vP,\vV}(\alpha_{q-1})\right),\quad\text{where $\restrict F
  {\vP,\vV}$ is defined in~\refeq{eq:defFPV}},
\]
\State Use a Reed-Solomon decoding algorithm to recover the polynomial
$\restrict R {\vP,\vV}$
\State\Return $\restrict R {\vP,\vV}(0)$.
\end{algorithmic}
\end{center}